\documentclass[aps,twocolumn,amsmath,amssymb,nofootinbib,superscriptaddress]{revtex4-1}
\usepackage{amsmath,amsthm,amsfonts,mathptmx,fullpage,diagbox,subfigure}
\usepackage{tablefootnote}
\usepackage[margin=0.7in]{geometry}
\usepackage[hang,small,bf]{caption}    
\usepackage{qcircuit}
\usepackage{tikz}	
\usepackage{tablefootnote} 
\usetikzlibrary{backgrounds,fit,decorations.pathreplacing}  

\def\c1{{\ctrl{1} }}
\def\c2{{\ctrl{2} }}
\def\b1{{\ctrl{-1} }}
\def\b2{{\ctrl{-2} }}

\usepackage{hyperref} 

\usepackage{graphicx,color,cases} 
\usepackage{xfrac}  
\definecolor{DarkGreen}{rgb}{0.1,0.5,0.1}
\definecolor{DarkRed}{rgb}{0.5,0.1,0.1}
\definecolor{DarkBlue}{rgb}{0.1,0.1,0.5} 
 
\newtheorem{theorem}{Theorem} 
\newtheorem{lemma}[theorem]{Lemma}

\theoremstyle{definition}

\numberwithin{equation}{section}
 
\def\>{\rangle} 
\def\<{\langle}


\newcommand{\qbmat}[4]{\begin{pmatrix}  {#1} & {#2} \\ {#3} & {#4} \\\end{pmatrix}}

\newcommand{\nc}{\newcommand}

\nc{\bbA}{\mathbb{A}} \nc{\bbB}{\mathbb{B}} \nc{\bbC}{\mathbb{C}}
\nc{\bbD}{\mathbb{D}} \nc{\bbE}{\mathbb{E}} \nc{\bbF}{\mathbb{F}}
\nc{\bbG}{\mathbb{G}} \nc{\bbH}{\mathbb{H}} \nc{\bbI}{\mathbb{I}}
\nc{\bbJ}{\mathbb{J}} \nc{\bbK}{\mathbb{K}} \nc{\bbL}{\mathbb{L}}
\nc{\bbM}{\mathbb{M}} \nc{\bbN}{\mathbb{N}} \nc{\bbO}{\mathbb{O}}
\nc{\bbP}{\mathbb{P}} \nc{\bbQ}{\mathbb{Q}} \nc{\bbR}{\mathbb{R}}
\nc{\bbS}{\mathbb{S}} \nc{\bbT}{\mathbb{T}} \nc{\bbU}{\mathbb{U}}
\nc{\bbV}{\mathbb{V}} \nc{\bbW}{\mathbb{W}} \nc{\bbX}{\mathbb{X}}
\nc{\bbZ}{\mathbb{Z}}

\nc{\bA}{{\bf A}} \nc{\bB}{{\bf B}} \nc{\bC}{{\bf C}}
\nc{\bD}{{\bf D}} \nc{\bE}{{\bf E}} \nc{\bF}{{\bf F}}
\nc{\bG}{{\bf G}} \nc{\bH}{{\bf H}} \nc{\bI}{{\bf I}}
\nc{\bJ}{{\bf J}} \nc{\bK}{{\bf K}} \nc{\bL}{{\bf L}}
\nc{\bM}{{\bf M}} \nc{\bN}{{\bf N}} \nc{\bO}{{\bf O}}
\nc{\bP}{{\bf P}} \nc{\bQ}{{\bf Q}} \nc{\bR}{{\bf R}}
\nc{\bS}{{\bf S}} \nc{\bT}{{\bf T}} \nc{\bU}{{\bf U}}
\nc{\bV}{{\bf V}} \nc{\bW}{{\bf W}} \nc{\bX}{{\bf X}}
\nc{\bY}{{\bf Y}} \nc{\bZ}{{\bf Z}}
\nc{\bHP}{{\bf HP}}

\nc{\bmA}{{\bm A}} \nc{\bmB}{{\bm B}} \nc{\bmC}{{\bm C}}
\nc{\bmD}{{\bm D}} \nc{\bmE}{{\bm E}} \nc{\bmF}{{\bm F}}
\nc{\bmG}{{\bm G}} \nc{\bmH}{{\bm H}} \nc{\bmI}{{\bm I}}
\nc{\bmJ}{{\bm J}} \nc{\bmK}{{\bm K}} \nc{\bmL}{{\bm L}}
\nc{\bmM}{{\bm M}} \nc{\bmN}{{\bm N}} \nc{\bmO}{{\bm O}}
\nc{\bmP}{{\bm P}} \nc{\bmQ}{{\bm Q}} \nc{\bmR}{{\bm R}}
\nc{\bmS}{{\bm S}} \nc{\bmT}{{\bm T}} \nc{\bmU}{{\bm U}}
\nc{\bmV}{{\bm V}} \nc{\bmW}{{\bm W}} \nc{\bmX}{{\bm X}}
\nc{\bmZ}{{\bm Z}}

\nc{\cA}{{\cal A}} \nc{\cB}{{\cal B}} \nc{\cC}{{\cal C}}
\nc{\cD}{{\cal D}} \nc{\cE}{{\cal E}} \nc{\cF}{{\cal F}}
\nc{\cG}{{\cal G}} \nc{\cH}{{\cal H}} \nc{\cI}{{\cal I}}
\nc{\cJ}{{\cal J}} \nc{\cK}{{\cal K}} \nc{\cL}{{\cal L}}
\nc{\cM}{{\cal M}} \nc{\cN}{{\cal N}} \nc{\cO}{{\cal O}}
\nc{\cP}{{\cal P}} \nc{\cQ}{{\cal Q}} \nc{\cR}{{\cal R}}
\nc{\cS}{{\cal S}} \nc{\cT}{{\cal T}} \nc{\cU}{{\cal U}}
\nc{\cV}{{\cal V}} \nc{\cW}{{\cal W}} \nc{\cX}{{\cal X}}
\nc{\cY}{{\cal Y}} \nc{\cZ}{{\cal Z}}

\begin{document}

\title{Faster quantum computation with permutations and resonant couplings}

\author{Yingkai Ouyang}\email[]
{y.ouyang@sheffield.ac.uk (corresponding author)}
\affiliation{Department of Physics \& Astronomy, University of Sheffield, Sheffield, S3 7RH, United Kingdom}
\affiliation{Singapore University of Technology and Design, 8 Somapah Road, Singapore 487372}
\affiliation{Centre for Quantum Technologies, National University of Singapore, 3 Science Drive 2, Singapore 117543}							 
\author{Yi Shen}\email[]
{yishen@buaa.edu.cn (corresponding author)}
\affiliation{School of Mathematics and Systems Science, Beihang University, Beijing 100191, China}

\author{Lin Chen}\email[]
{linchen@buaa.edu.cn (corresponding author)}
\affiliation{School of Mathematics and Systems Science, Beihang University, Beijing 100191, China}
\affiliation{International Research Institute for Multidisciplinary Science, Beihang University, Beijing 100191, China}

\date{\today}

\frenchspacing

\begin{abstract}
Recently, there has been increasing interest in designing schemes for quantum computations that are robust against errors.
Although considerable research has been devoted to quantum error correction schemes, 
much less attention has been paid to optimizing the speed it takes to perform a quantum computation and developing computation models that act on decoherence-free subspaces.
Speeding up a quantum computation is important, because fewer errors are likely to result.
Encoding quantum information in a decoherence-free subspace is also important, because errors would be inherently suppressed.
In this paper, we consider quantum computation in a decoherence-free subspace and also optimize its speed. 
To achieve this, we perform certain single-qubit quantum computations by simply permuting the underlying qubits.
Together resonant couplings using exchange-interactions, 
we present a new scheme for quantum computation that potentially improves the speed in which a quantum computation can be done. 
\end{abstract}

\maketitle

\section{Introduction}
By harnessing the powers of quantum mechanics, quantum computers can potentially unleash algorithms of unprecedented power, improve the precision of metrology, and enable novel cryptographic schemes.
However the inherent fragility of quantum information frustrates the construction of any quantum computer that is to compute correctly. 
Designing schemes for quantum computation that can proceed with high fidelity is thus an important problem.

Performing a quantum computation within a decoherence-free subspace (DFS) is a natural approach towards combating decoherence, and is applicable in a wide range of physical systems.
Conventionally in an artificial atom, a qubit has its basis states assigned to be a ground state and an excited state respectively.
Because of this, the qubit is vulnerable to picking up an unwanted phase that results entirely from the physical system's natural dynamics.
To avoid this problem, several artificial atoms could encode a logical qubit in a constant energy subspace by assigning the logical basis states to have a constant number of excitations.
Spurious phase errors that would have resulted from the natural dynamics of the system's evolution can thereby be intrinsically avoided.
Such encodings into DFSs have been explored both in the context of quantum computation \cite{KLM01,divincenzo2000universal,Bacon01} and quantum error correction \cite{CLY97,LBW99,BvL16,ouyang2018permutation}.
Combined with carefully tailored active control on the underlying physical system \cite{UhrigPhysRevLett.98.100504,heinze2018universal}, decoherence can be substantially mitigated. 

An observation that one could make is that the longer it takes to perform a quantum computation, the more likely it is for quantum information to decohere.
It is therefore important to speed up a quantum computation to maximize the computation's fidelity. 
This has been discussed in the context of performing quantum computations within DFSs \cite{divincenzo2000universal}, where the speed of a spin-based quantum computation is optimized by sidestepping the need to use slow single qubit gates. In Ref \cite{divincenzo2000universal}, a logical qubit comprises of three physical qubits, and all gates are driven entirely by resonant couplings between pairs of physical qubits. Because these coupling strengths are large, the computation can be fast.
In the work of DiVincenzo {\em et al.}, the speed of the computation can be quantified by the number of timesteps the computation requires.
With respect to this, single-qubit gates and a CNOT gate can be performed in 3 and 11 timesteps respectively.

Inspired by the possibility of performing quantum computations entirely by exchange interactions in Ref \cite{divincenzo2000universal}, Levy devised a scheme for quantum computation using only exchange interactions, but which requires far fewer timesteps \cite{levy2002universal}. 
In Levy's scheme, just two physical qubits form a logical qubit, and single-qubit gates and an entangling two-qubit gate are performed in 3 and 2 timesteps respectively. 
However, Levy's scheme no longer computes within a DFS, and its logical qubits are hence vulnerable to phase noise. 
Similarly, while superconducting qubits \cite{blais2004cavity} can have quantum computation that results from strong resonant couplings and are hence fast, the resultant computation also does not lie within a DFS.
Therein arises the question: can one reduce the number of timesteps required in a quantum computation that proceeds within a DFS?

The subject of how one can perform a quantum computation has been extensively studied.
Of specific interest to us are models of quantum computation based on the quantum circuit model, where the computation is composed from certain single-qubit and two-qubit gates \cite{boykin,KLM01,bravyi2005universal}.

In a markedly different setting, one can indeed implement quantum computations in one of the simplest imaginable ways, which is solely by performing permutations of the underlying particles.
One related model of quantum computation relies on braiding particles \cite{freedman2003topological}; the braiding group for $n$ objects is different and richer than the permutation group of $n$ objects. 
However, quantum computation by braiding uses nonabelian anyons, and remains challenging to implement in a laboratory setting.
In view of this, one might wonder if quantum computation can be performed simply by permuting regular qubits.
 However in most physical implementations, permuting the qubits is not an easy operation. 
What we actually recommend is that physical qubits need not be actually permuted in practice; a classical computer keeps track of all the permutations that take place, essentially relabeling the qubits whenever a permutational operation has to take place. Non-permutational gates then proceed by interacting the corresponding relabeled qubits. This allows permutational gates to be performed by a classical computer, and thereby allow faster quantum computation by this precompiling technique.

The question of permutational quantum computation has been studied by Jordan, who studied the extent in which permutations alone can perform an interesting family of quantum computations on rather specific problems \cite{jordan2009permutational}.
It is however unlikely that permuting qubits can effect an arbitrary quantum computation, because there is only a finite number of permutations while the number of possible quantum computations is infinite.
For example, Jordan gives in \cite[Section 9]{jordan2009permutational} a counting argument explaining why it is unlikely that permutational quantum computing is equal to BQP, the class of decision problems solvable by a quantum computer in polynomial time with a bounded probability of error.
Hence, to achieve a qubit-based universal quantum computation,
permutations must be augmented by non-permutational gates.

Could one perform quantum computations in a DFS using only permutations and resonant couplings?
In this paper, we answer this question in the affirmative. 
We show that a non-trivial set of permutations augmented by realistic resonant couplings can indeed allow for a universal quantum computation to proceed within a DFS.
Such a scheme could reduce the number of timesteps required in the quantum computation. 
This is because a classical computer could keep track of how the underlying qubits are permuted and determine where and when resonant couplings are applied between pairs of physical qubits.
Because of this, parts of a quantum computation can be offloaded to a classical computer, which allows the quantum computation to proceed with fewer timesteps than the scheme of Ref \cite{divincenzo2000universal}.

The caveat of requiring at least 32 physical qubits to encode each logical qubit might not be too severe, because the potential speed ups offered when the quantum computation to be performed is reasonably complicated, such as that in fault-tolerant quantum computation, might be worth the while.

The possibility of performing the `classical part' of a quantum computation by permutations, while intuitive is far from obvious. 
The Clifford group of single qubit gates, generated by a Hadamard and a complex phase gate, is a finite group, 
and hence is by Cayley's theorem a subgroup of a symmetric group of some size \cite{dummit-foote}.
While the explicit structure of the single-qubit Clifford group has been elucidated by Planat \cite{planat2010clifford}, its realization in terms of permuting qubits is far from resolved.
This is because even though the matrix representation of the single-qubit Clifford group in terms of permutations exists, there might not exist a state space on which these permutations act faithfully. 
For example, while the matrix algebra of the single-qubit Clifford gates can be made to be correct, its action is entirely dependent on the basis chosen for qubit's logical encoding.
An arbitrary selection of the basis would invariably lead to an incorrect group action on the chosen basis states. 
We show that certain subgroups of the single-qubit Clifford group can be performed entirely by permutations, and give a lower bound for the number of qubits required to realize single-qubit Clifford computations by permutations.

The scheme that we propose to augment present can be seen to be a discretized version of the dual-rail encoding \cite{KLM01}.
When our scheme encodes 32 physical qubits into a logical qubit, it allows the $\frac \pi 8$-gate and bit flip gates to be performed by permutations.
Exploiting the fact that the bit-flip gate is a product of disjoint swaps, we show that the
Hadamard gate and CNOT can be performed using physically realistic resonant couplings.
To illustrate the potential of our scheme, we calculate the number of timesteps required to implement an important gate in quantum computation, the Toffoli gate.
Our proposed scheme implements the Toffoli gate within a DFS in 82 timesteps, which is slightly faster than the 85 timesteps required in the scheme of DiVincenzo {\em et al.}.

The resonant couplings that we require in our scheme are arguably physically realistic. 
Their purpose is to implement the quantum Fredkin gate and exchange interactions on the physical qubit level. 
Exchange interactions can potentially be implemented accurately even at a non-zero temperature \cite{schuetz2017high} in an experimental setting. The quantum Fredkin gate is a controlled swap gate, and has been extensively studied \cite{chau1995simpleFredkin,patel2016Fredkin,liu2018onestepFredkinSC} since its introduction \cite{milburn1989Fredkin}. While the quantum Fredkin gate is more challenging to implement than simple exchange, recent work suggests that it can be implemented using physically realistic resonant interactions on superconducting qubits in a single timestep \cite{liu2018onestepFredkinSC}.

Our results are theoretical in nature, and do not deal with any specific experimental system. This allows our approach to be potentially applied in a wide range of physical systems once the specifics of these systems are taken into account.
In Section \ref{sec:re}, we present our first scheme which allows universal quantum computation. This scheme is an extended dual-rail scheme which uses 32 qubits to encode a single logical qubit. We further analyze the performance of our extension of the dual-rail encoding scheme with respect to generating a Toffoli gate in Section \ref{subsec:Toffoli}. In Section \ref{sec:lin's H and Z}, we present our second scheme, which show how to implement the Hadamard gate $(H)$ and the phase-flip $(Z)$ simultaneously using only permutations. 
This demonstrates for the first time how a permutational Hadamard could be implemented with other Pauli permutational gates.
In Section \ref{subsec:all24gates}, we investigate the possibility of permutational single-qubit Clifford gates, specifically on the minimum number $M$ of physical qubits to generate the full set of single-qubit Clifford gates by permuting the underlying qubits. In this section we present a non-trivial bound of $M$, ruling out schemes that
can perform permutational Cliffords on too few qubits. 
We also give a set of necessary and sufficient conditions to obtain $M$, which paves the way forward in finding permutational gates that implement the Clifford gates. Finally we summarize and discuss our work in Section \ref{sec:dis}.

\section{Results}
\label{sec:re}
To explain how our scheme works in a DFS, we first define the system's Hamiltonian. We are interested in a system with $N$ modes, each mode of which can be described by identical quantum harmonic oscillators. Explicitly, we can write the Hamiltonian as $H = \sum_{i=1}^N a_i ^\dagger a_i$,
where $a_i$ denotes the lowering operator for the $i$th mode.
Such a Hamiltonian for example is compatible with a physical system comprising of photons of identical frequencies within a quantum bus,
and can also be engineered from any coupled quantum harmonic oscillator system via application of dynamical decoupling pulse sequences \cite{heinze2018universal}.
Note that we can denote $|x_1\> \otimes \dots \otimes |x_N\>$ as a quantum state with $x_i$ excitations in the $i$th mode. Then we call $x_1+ \dots + x_N$ denote the total excitation number of such a state. 
If the number of excitations per mode is at most one, then $(x_1,\dots, x_N)$ are just binary vectors, and
the corresponding excitation number is just their Hamming weight.
It is easy to see that the eigenspaces of $H$ are spanned by states $|x_1\> \otimes  \dots \otimes |x_N\>$ with a constant total excitation number. 
Schemes based on such constant-excitation subspaces have also recently been studied \cite{BvL16,ouyang2018permutation}.
 
Up to a global phase, single-qubit Clifford gates are given by 24 matrices $A_{ij}$. When $j=1,2,3,4$ we have
$   A_{1j}=\qbmat{1}{0}{0}{i^j}$ and    $A_{2j}=\qbmat{0}{1}{i^j}{0}.$ 
Moreover   
$A_{5j}=\frac 1 {\sqrt 2} (A_{1j}+iA_{2j})$ 
and
$A_{6j}=\frac 1 {\sqrt 2} (A_{1j}-iA_{2j})$.
Furthermore
\begin{align}
   A_{31}&=\frac 1 {\sqrt 2} (A_{12}+A_{24}),  \quad
   A_{32}=\frac 1 {\sqrt 2} (A_{14}+A_{22}),  \\
   A_{33}&=\frac 1 {\sqrt 2} (A_{11}+A_{23}),  \quad
   A_{34}=\frac 1 {\sqrt 2} (A_{13}+A_{21}),  
  \end{align}
  and
  \begin{align}
   A_{41}&=\frac 1 {\sqrt 2} (A_{12}-A_{24}),  \quad
   A_{42}=\frac 1 {\sqrt 2} (A_{14}-A_{22}),  \\
   A_{43}&=\frac 1 {\sqrt 2} (A_{11}-A_{23}),  \quad
   A_{44}=\frac 1 {\sqrt 2} (A_{13}-A_{21}).
\end{align}
Note that these 24 gates can be generated by the Hadamard gate $H=A_{31}$ and the phase gate $P = A_{11}$.  
In what follows, we will investigate the extent to which these 24 gates can be implemented by permutations.

\subsection{An extended dual-rail encoding scheme}
\label{subsec:exensch}
In the dual-rail encoding scheme previously considered, 
the states $|0\>|1\>$ and $|1\>|0\>$ encode the logical states of a qubit \cite{KLM01}.  
In this scheme, the only non-trivial permutation possible is one 
that swaps the first qubit with the second.
We denote this permutation by $(1,2)_q$, where the subscript $q$ indicates that $(1,2)_q$ applies to qubits. 
The permutation, $(1,2)_q$ is equivalent to the bit-flip operation on the space, because $(1,2)_q |0\>|1\> =  |1\>|0\>$ and $(1,2)_q|1\>|0\> = |0\>|1\>$.
Since there is no other non-trivial permutation available on two qubits, 
only the bit-flip gate can be performed using permutations in the dual-rail encoding scheme.

By extending the dual-rail encoding scheme, it becomes possible to implement more gates by permuting qubits. 
Namely, we map each physical qubit in the dual-rail encoding scheme to a $2n$-qubit state with two rows of $n$ qubits each.
By extending the dual-rail encoding scheme, we can not only implement the logical bit flip, but also logical gates that perform logical $\begin{pmatrix}
1 & 0 \\
0 & e^{2 \pi i / n} \\
\end{pmatrix}$ gates in the logical basis.

Before we proceed to describe our scheme, we wish to emphasize we do not recommend that permutational gates be actually implemented on the physical system. 
This is because permutational gates are challenging to implement in practice, and may take too long to perform, thereby exposing the system to decoherence.
In our scheme, the permutational gates are only meant to be carried out by a classical computer. 
The classical computer computes the permutations of the labels on the underlying physical qubits. 
The non-permutational gates, which can be carried out by resonant couplings, 
are then performed between appropriate pairs or subsets of relabeled qubits. 
This can be achieved in physical systems where long-range interactions may be possible, such as in ion-trapped quantum architectures or carefully designed superconducting quantum circuits.

The basis states in our extended dual-rail encoding scheme will comprise of $4n$ qubits, with four rows with $n$ qubits each. 
The basis states can be defined in terms of the states $|(j)_n\> = X_j|0\>^{\otimes n}$,
where $X_j$ denotes the $n$-qubit matrix that applies the bit-flip operator on the $j$th qubit and leaves the remaining qubits unchanged.
In particular, instead of using two physical qubits in the dual-rail encoding scheme, we use $4n$ qubits to encode a logical qubit. 
Denoting $x=e^{2\pi i / n}$ to be a root of unity, 
we denote the states 
\begin{align}
 |\psi_0\> &= \frac 1{\sqrt n} \sum_{j=1}^n x^{j-1}  
 |(j)_{n}\> \otimes |0\>^{\otimes n},
\end{align}
and
\begin{align}
 |\psi_1\> &= 
 |0\>^{\otimes n} \otimes
 \frac 1{\sqrt n} \sum_{j=1}^n x^{j-1}  |(j)_{n}\>.
\end{align}
We interpret both of these states to have qubits on two rows, where each row has $n$ qubits.
The orthonormal basis states of our scheme are then given by
\begin{align} 
|0_L^{XT}\> &= |\psi_0\>|\psi_1\> ,\quad 
|1_L^{XT}\> = |\psi_1\>|\psi_0\>,\label{eq:xt}
\end{align}
which are states on $4n$ qubits arranged in 4 rows with $n$ qubits each.

To implement the bit-flip operation, 
it suffices to apply the permutation 
$ \alpha = \beta \otimes \beta$, 
where $\beta$ is a permutation that swaps the rows in $|\psi_0\>$ and $|\psi_1\>$. 
This is depicted in Figure \ref{fig:alpha}.
Formally, $\beta = (1,n+1)_q (2,n+2)_q...(n,2n)_q$ is a product of swaps, where $(j,n+j)_q$ denotes a swap between the $j$th qubit with the $(n+j)$th qubit.

To implement a logical gate that induces a phase on $|1_L^{XT}\>$ but not on $|0_L^{XT}\>$, 
it suffices to cycle qubits on the third row of the logical qubit leftwards.
We can achieve this formally with the permutation 
$\gamma=I_{2^{2n}} \otimes (n,n-1,...,1)_q,$
where $I_{k}$ denotes a size $k$ identity matrix,
and the permutation $(n,n-1,...,1)_q$ cycles qubits 1 to $n$, where the $j$th qubit is mapped to the $(j-1)$th qubit for all $j=2,\dots,n$, and the first qubit is mapped to the $n$th qubit.
We depict this cyclic permutation $\gamma$ in Figure \ref{fig:gamma}.
Then 
\begin{align}
\gamma |0^{XT}_L\> =  |0^{XT}_L\>
\end{align}
and 
\begin{align}
 \gamma |1^{XT}_L\> =  x|1^{XT}_L\> .    
\end{align}
To understand this more explicitly, note that every qubit on  third row of qubits for the logical $|0^{XT}_L\>$ are all identically equal to $|0\>$, and are thereby left unchanged by any permutation on them.
On the other hand, the third row of qubits in the logical $|1^{XT}_L\>$ have the joint state 
$ \frac 1{\sqrt n} \sum_{j=1}^n x^{j-1}  
 |(j)_{n}\> $,
 which picks up a phase of $x$ when cycled leftwards.
 
 Thus, on the logical basis, $\gamma$ 
 implements the
 $\begin{pmatrix}
 1 & 0 \\
 0 & x \\
 \end{pmatrix}$,
 which is equivalent to the $T$ gate
 when $n=8$, because in this scenario, $x =  e^{2\pi i /8} = e^{\pi i /4}$.

\begin{figure}[htbp]
\centering
\subfigure[The permutation $\alpha$ that implements a bit-flip gate.]{
\begin{minipage}[t]{0.5\linewidth}
\centering
\includegraphics[width=1.6in]{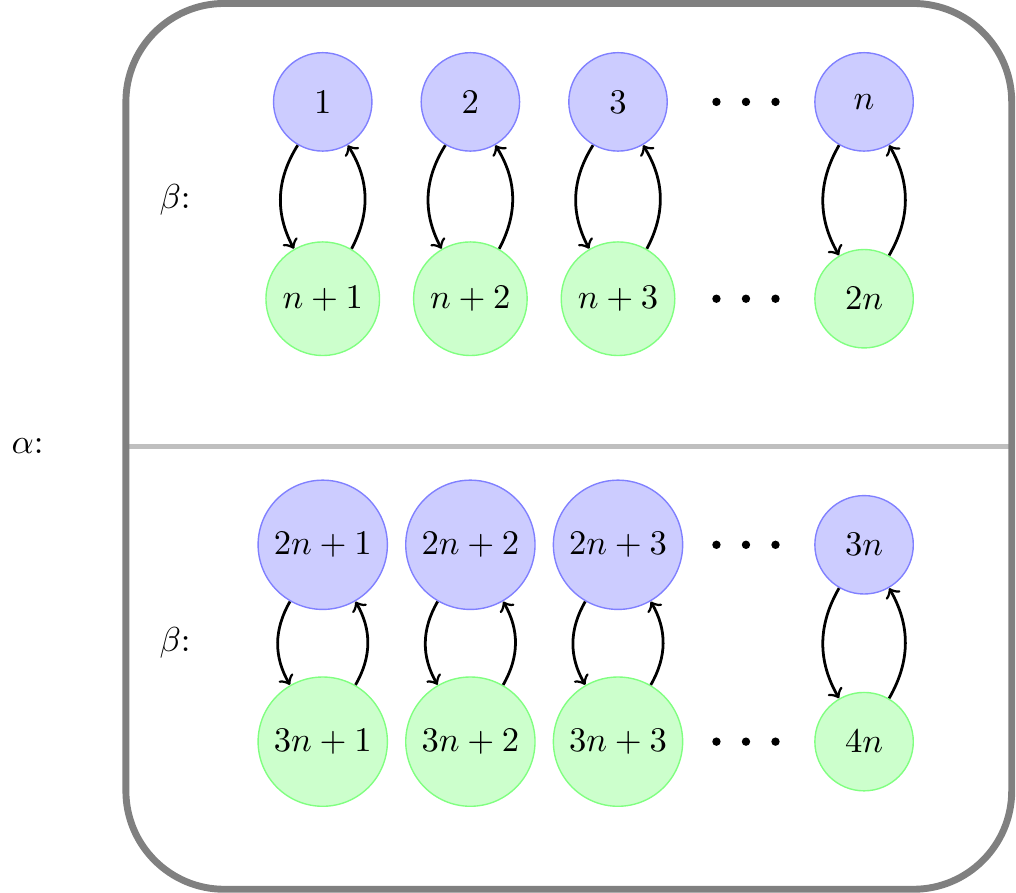}
\label{fig:alpha}
\end{minipage}%
}%
\subfigure[The Hadamard uses $U_{\beta}$, where resonances occur in parallel.]{
\begin{minipage}[t]{0.5\linewidth}
\centering
\includegraphics[width=1.6in]{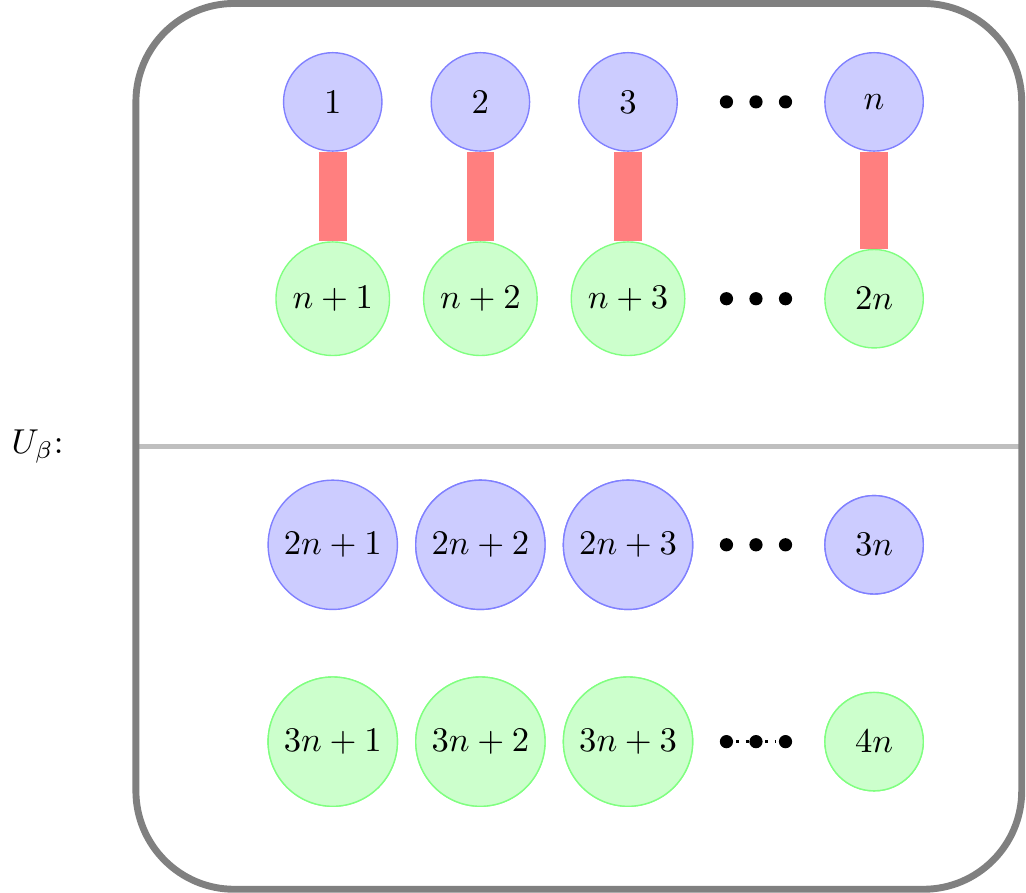}
\label{fig:ubeta}
\end{minipage}%
}%

\subfigure[The permutation $\gamma$ that implements a 
$ |0^{XT}_L\> \<0^{XT}_L| +   e^{2 \pi i /n} |1^{XT}_L\> \<1^{XT}_L|$ gate.]{
\begin{minipage}[t]{0.8\linewidth}
\centering
\includegraphics[width=2.5in]{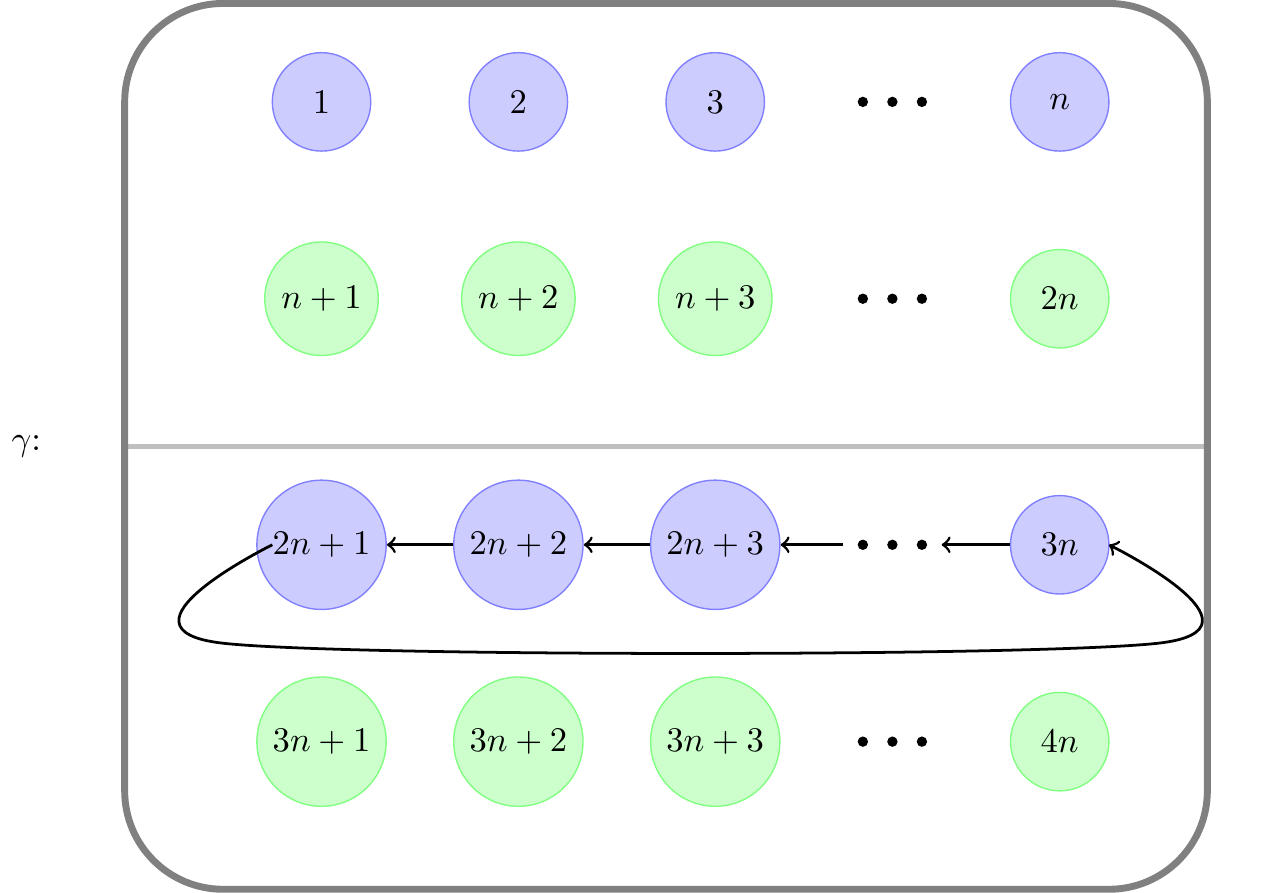}
\label{fig:gamma}
\end{minipage}
}%
\centering
\caption{Some quantum gates constructed from permutations.}
\end{figure}

 Therefore, with 32 qubits and with the logical states in Eq.~(\ref{eq:xt}), we can use the permutations $\alpha$ and $\gamma$ to implement the bit-flip gate and the $T$ gate respectively. When $n=4$ and we have 16 qubits, we can implement a subgroup of the Clifford group as shown in Table \ref{tab:PX}. We will see that the larger $n$ is, the more gates we can implement by permutations.

\begin{table}[h]
\centering
\caption{The eight single-qubit Clifford gates $A_{ij}$ generated by $P$ gate and $X$ gate when $n$ is a multiple of 4.}\label{tab:PX}
\begin{tabular}{|l|cccc|}
\hline
\diagbox{i}{$A_{ij}$}{j} & 1 & 2 & 3 & 4\\
\hline
1 & $P$ & $P^2$ & $P^3$ & $P^4$\\
2 & $PX$ & $(PX)^2$ & $(PX)^3$ & $(PX)^4$ \\ 
\hline
\end{tabular}
\end{table}

In order to allow for arbitrary single-qubit operations, we also need to implement the Hadamard gate on the basis states given by Eq.~(\ref{eq:xt}).
Here, we can no longer rely on permuting the underlying physical qubits. 
Instead, we must utilize resonant couplings which induce Rabi oscillations between specific two-level systems of our choosing.
Specifically, the main idea of how to implement the logical Hadamard arises from a simple observation with respect to the dual-rail encoding scheme. 
For the dual-rail encoding scheme, it suffices to implement the Hamiltonian $X_1 X_2$ and appropriate phase gates. The Hamiltonian $X_1X_2$ in turn is equivalent to implementing the Hamiltonian $X_1$ sandwiched between two CNOT gates.

The permutation $\beta$ which acts as a bit-flip gate between the state $|\psi_0\>$ and $|\psi_1\>$ is essentially a swap of two rows of qubits. 
It is thus a many-body Hamiltonian, and is challenging to implement in practice. 
Fortunately, as we will show, a Hamiltonian $\beta$ is effectively simulated by pairwise resonant couplings between 
pairs of qubits on a subspace our states reside in.
The first type of resonant coupling we use must induce Rabi oscillations between the two-qubit states $|0\>|1\>$ and $|1\>|0\>$. This can be achieved using the Heisenberg exchange interaction $(1,2)_q$
An important technical observation that we use is that a many-body Pauli-type interaction with exchange interactions on a single-excitation subspace 
$B_k = \{ X_1 |0\>^{\otimes k}, \dots, X_k |0\>^{\otimes k} \}   $
can be parallelized as given in Theorem \ref{thm:parallel-XX-16}.
\begin{theorem} \label{thm:parallel-XX-16}
Let $|\psi\>$ be any state in the span of $B_{2n}$, and let $\theta$ be any real number.
Then 
\begin{align}
e^{i\theta \pi_{12}}
e^{i\theta \pi_{34}}
\dots
e^{i\theta \pi_{(2n-1)(2n)}} 
|\psi\>
=
e^{(n-1)i\theta}
e^{i\theta \pi_{12}\pi_{34} \dots \pi_{(2n-1)(2n)}} |\psi\>, \notag
\end{align}
where $\pi_{ij}=(i,j)_q$ is an operator that swaps qubit $i$ and qubit $j$.
\end{theorem}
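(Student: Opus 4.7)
The plan is to exploit two structural facts: the pairwise swap operators $\pi_{12},\pi_{34},\dots,\pi_{(2n-1)(2n)}$ act on disjoint qubit pairs and therefore mutually commute, and each is an involution ($\pi_{(2k-1)(2k)}^2 = I$) so that $e^{i\theta\pi_{(2k-1)(2k)}} = \cos\theta\, I + i\sin\theta\, \pi_{(2k-1)(2k)}$. By linearity, it suffices to verify the identity on the basis $B_{2n}$, so I will fix an arbitrary $X_j|0\>^{\otimes 2n}$ and let $k^\star$ denote the unique pair index such that $j \in \{2k^\star-1, 2k^\star\}$.

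The key observation I would isolate is that on a single-excitation state, every pair other than the one containing the excitation is in the state $|00\>$, which is fixed by its swap. Consequently, for any subset $S \subseteq \{1,\dots,n\}$, the product $\prod_{k \in S} \pi_{(2k-1)(2k)}$ acts on $X_j|0\>^{\otimes 2n}$ as $\pi_{(2k^\star-1)(2k^\star)}$ if $k^\star \in S$, and as the identity otherwise. In particular, the global permutation $\Pi = \pi_{12}\pi_{34}\cdots\pi_{(2n-1)(2n)}$ agrees with the single local swap $\pi_{(2k^\star-1)(2k^\star)}$ on this state.

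Having established this collapse, I would expand the left-hand side using the commuting-involution identity as $\sum_{S}(\cos\theta)^{n-|S|}(i\sin\theta)^{|S|}\prod_{k \in S}\pi_{(2k-1)(2k)}$ and split the sum according to whether $k^\star \in S$. The two partial sums are binomial expansions of $(\cos\theta + i\sin\theta)^{n-1} = e^{i(n-1)\theta}$, yielding $e^{i(n-1)\theta}\bigl(\cos\theta\, I + i\sin\theta\, \pi_{(2k^\star-1)(2k^\star)}\bigr) X_j|0\>^{\otimes 2n}$. The right-hand side, using the involution identity on $\Pi$, equals $e^{i(n-1)\theta}\bigl(\cos\theta\, I + i\sin\theta\, \Pi\bigr) X_j|0\>^{\otimes 2n}$, which matches by the observation above.

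I do not anticipate any serious obstacle here: the main subtlety is the bookkeeping in the binomial resummation and being precise that $\Pi$ and $\pi_{(2k^\star-1)(2k^\star)}$ only agree on the single-excitation subspace (they differ globally), so the identity genuinely requires the restriction to $\operatorname{span}(B_{2n})$. Once the two sides are computed on a fixed basis vector, linearity completes the proof.
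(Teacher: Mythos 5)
Your proof is correct. It reaches the same conclusion as the paper but by a genuinely different route. The paper first proves an \emph{additive} operator identity on the single-excitation subspace: representing each disjoint swap as a $2n\times 2n$ permutation matrix, it shows $P_{12}P_{34}\cdots P_{(2n-3)(2n-2)} + P_{(2n-1)(2n)} = P_{12}\cdots P_{(2n-1)(2n)} + I_{2n}$ (via the block form $\sigma\oplus\cdots\oplus\sigma$), and then uses commutativity to turn $e^{i\theta A}e^{i\theta B}=e^{i\theta(A+B)}$ into $e^{i\theta}e^{i\theta AB}$, iterating pairwise to accumulate the factor $e^{(n-1)i\theta}$. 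You instead expand each factor via the involution identity $e^{i\theta\pi}=\cos\theta\, I+i\sin\theta\,\pi$, evaluate on a fixed basis vector of $B_{2n}$ using the collapse of any product of disjoint swaps to the single swap touching the excitation, and resum the two binomial series. Both arguments ultimately rest on the same structural fact about disjoint transpositions acting on a single-excitation state, but they package it differently: the paper's version is a clean operator identity on the whole $2n$-dimensional subspace at once and generalizes immediately by induction, while yours is more elementary, works basis vector by basis vector, and has the virtue of making completely explicit where the restriction to $\operatorname{span}(B_{2n})$ is used (your correct remark that $\Pi$ and $\pi_{(2k^\star-1)(2k^\star)}$ agree only on that subspace) --- a point the paper's write-up passes over rather quickly when it identifies the qubit swaps with the $2n\times 2n$ permutation matrices.
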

We prove Theorem \ref{thm:parallel-XX-16} in Appendix \ref{sec:proofth1}.
Because of this theorem, we can decompose $U_\beta$ as illustrated in Figure \ref{fig:ubeta} into a product 
\begin{align}
U_\beta = e^{-(2^n-1)i  \frac{ \pi}{4}} e^{i (1,n+1)_q  \frac{ \pi}{4}} \dots e^{i (n,2n)_q  \frac{ \pi}{4}} \otimes I_{2^{2n}}.
\end{align}
It is therefore possible to implement $U_\beta$ in a single timestep by parallel use of Heisenberg exchange couplings, which makes it possible to implement in realistic physical settings.
We will soon see that the Hadamard gate relies on sandwiching the unitary $U_\beta=e^{i \beta  \frac{ \pi}{4}}\otimes I_{2^{2n}}$
between CNOT-like gates on the states $|\psi_i\>|\psi_j\>$, which we denote as $C_\beta$.
In particular we would like 
\begin{align}
    C_{\beta}
    |\psi_i\>\otimes |\psi_j\> 
    &= 
    |\psi_i\>\otimes |\psi_{(i+j\mod 2)}\>,\quad i,j=0,1.
\end{align}

We now propose how $C_{\beta}$ can be implemented using 
quantum Fredkin gates, {\em i.e.} controlled swap operations.
An advantage of relying on quantum Fredkin gates is that they can be implemented using an effective Hamiltonian 
$g |1\>\<1| (1,2)_q$. Similar interactions can be implemented using superconducting qubits for example \cite{liu2018onestepFredkinSC}. 
This effective Hamiltonian allows Rabi oscillations to be induced between a pair of states $|0\>|1\>$ and $|1\>|0\>$ conditioned on the state of the control qubit, while leaving $|0\>|0\>$ and $|1\>|1\>$ unchanged.
Now denote $F_{a,(b,c)}$ as a quantum Fredkin gate with the $a$th qubit as the controlling qubit, and the swap occurring between the $(2n+b)$th and $(2n+c)$th qubits respectively.  
Specifically, we propose that 
\begin{align}
    C_\beta = \prod_{j=1}^n  C_{j,\beta}
\end{align}
where
\begin{align}
    C_{j,\beta} = F_{n+j,(1,n+1)} \dots F_{n+j,(n,2n)}  
\end{align}
is an operator that swaps the third and fourth rows within the logical qubit conditioned on the value of the $j$th qubit in the second row.
Note that $C_{j,\beta}$ takes $n$ timesteps to proceed, 
and $C_\beta$ can be arranged such that $n^2$ gates occur in $n$ timesteps.

Then it follows that $\gamma^{-2}  C_{\beta}  U_{\beta} C_{\beta}$ acts as a Hadamard on the basis states given by Eq.~(\ref{eq:xt}) because for $j=0,1$ we have 
\begin{align}
    \gamma ^{-2} C_{\beta}  U_{\beta} C_{\beta} |j^{XT}_L\> 
    &=
    \frac{ |0^{XT}_L\> + (-1)^j   |1^{XT}_L\> }{\sqrt 2}.
\end{align}
Hence to implement the Hadamard gate, we require 
$2n+1$ time steps, where $n$ timesteps arise from $C_\beta$, one time step arises from $U_\beta$, and another $n$ arise from $C_\beta$.

We now explain why the $U_\beta$ and $C_\beta$ gates can be implemented within a decoherence-free subspace of the physical Hamiltonian. 
Explicitly, the physical Hamiltonian that we consider is a sum of identical artificial atoms (or quantum harmonic oscillators). 
What this means is that this Hamiltonian commutes with any swap operator on qubits.
Since $U_\beta$ relies on Hamiltonians that are swaps on distinct pairs of qubits, $U_\beta$ must proceed with the DFS of the physical system.
For a similar reason, the operator 
$|1\>\<1| ( 1,2)_q$ commutes with the physical Hamiltonian on the subspace where the quantum computation takes place.
This is because the control part $|1\>\<1|$ is just diagonal in the basis of the physical Hamiltonian, and 
the physical Hamiltonian being permutation-invariant commutes with any swap operation.

To make universal quantum computation possible with our extended dual-rail scheme, we need to show how to perform an entangling gate. 
It turns out that we can perform a CNOT between two logical qubits defined on the basis states (\ref{eq:xt}), by 
appropriate controlled swaps of the underlying rows of qubits.
Now we consider quantum Fredkin gates with controls within the first logical qubit and swaps occurring on the second logical qubit. In particular, let $C_{i,j,k}$ denote a quantum Fredkin gate that swaps the $j$th and $k$th qubits on the target logical qubit. conditioned on the value of the $i$th qubit in the control logical qubit.
Then the logical CNOT can be achieved by doing the following.
\begin{enumerate}
    \item Apply $C_{n+i,j,n+j}$ for $i,j=1,\dots,n$.
    \item Apply $C_{2n+i,j,n+j}$ for $i,j=1,\dots,n$.
\end{enumerate}
This procedure swaps the first and second rows, and the third and fourth rows of the target logical qubit.
These  swaps are conditioned to swapped conditioned on whether the second and third rows in the control logical qubit are all in the all $|0\>$ state.
Hence our logical CNOT uses $2n^2$ quantum Fredkin gates in total, and can be achieved in $n$ timesteps by running appropriate quantum Fredkin gates in parallel.

\subsection{Timesteps required for the Toffoli gate}
\label{subsec:Toffoli}
Here, we analyze the performance of our extension of the dual-rail encoding scheme with respect to generating a Toffoli gate. 
This Toffoli gate can be decomposed into CNOTs and T gates, and it is illustrated in Figure \ref{fig:Toffoli}.

\begin{figure}[ht]
\centering
\includegraphics[width=3.2in]{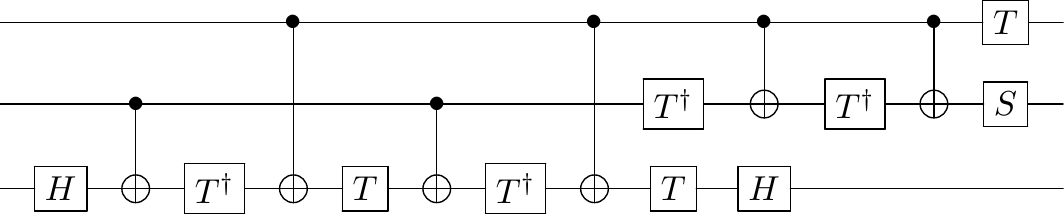}
\caption{The Toffoli gate constructed from CNOTs and T gates}
\label{fig:Toffoli}
\end{figure} 
Two $H$ gates and six CNOTs are required in this setup.
The total number of $S, T$ and $T^\dagger$ gates, which are permutational in our scheme, is $8$.
Hence the number of timesteps required in our extended dual-rail encoding scheme is  $6(n)+2(2n+1) = 10n+2 = 82$ for $n=8$. 
We again emphasize that the permutations corresponding to the $T$, $T^\dagger$ and $S$ gates in our scheme are kept track of by a classical computer, and subsequently, 
and the resonant couplings are carried out between appropriately permuted pairs of qubits.

In the scheme of DiVincenzo {\em et al.} \cite{divincenzo2000universal} which operates within a DFS, each single qubit gate requires at least one time step, and each CNOT gate requires 13 timesteps. Hence this scheme requires at least $7 + 6(13) = 85$ timesteps are required for a single Toffoli gate, which takes longer than that of our scheme.

\subsection{Hadamard gates by permutations}
\label{sec:lin's H and Z}

It is considerably more complicated to implement a Hadamard by permutations. Extending the results of the previous section would surely fail, because $H$ and $T$ generate an set of infinite size, while the number of permutations on any fixed number of qubits is always finite. Hence, any scheme which implements a Hadamard gate by permutations necessarily has to be quite different from the extended dual-rail encoding scheme.

Here, we show the gates generated by the Hadamard ($H$) and the phase-flip ($Z$) can be implemented using only permutations. From Table \ref{tab:hz} we can implement 8 such Clifford gates $A_{ij}$ by using $H$ and $Z$ gates simultaneously.  
\begin{table}[h]
\centering
\caption{The eight single-qubit Clifford gates $A_{ij}$ generated by $H$ gate and $Z$ gate}\label{tab:hz}
\begin{tabular}{|l|cccc|}
\hline
\diagbox{i}{$A_{ij}$}{j} & 1 & 2 & 3 & 4\\
\hline
1 &  & $Z$ &  & $Z^2$\\
2 &  & $ZHZH$ &  & $Z^2HZH$\\
3 & $H$ & $H^2ZH$ &  & \\
4 & $H^2ZH^2$ & $HZ$ &  & \\ 
\hline
\end{tabular}
\end{table}
We first consider the construction of permutational Hadamard gates that act on the basis states
\begin{eqnarray}
\label{eq:0h}
|0^H\>=&& 
\frac 1 {\sqrt 2}
(|x_0,y_0\>+|x_1,y_1\>),	
\\
\label{eq:1h}
|1^H\>=&& 
\frac i {\sqrt2} 
(|x_0,y_0\>-|x_1,y_1\>).
\end{eqnarray}
Here, we require the vectors $|x_0\>,|x_1\>,|y_0\>,|y_1\>$ to have unit norm, and $\<x_0|x_1\> = \<y_0|y_1\> = 0$. 
A permutation $\bH$ acts as a Hadamard gate if 
$\bH|0^H\>={|0^H\>+|1^H\>\over\sqrt{2}}$ and 
$\bH|1^H\>={|0^H\>-|1^H\>\over\sqrt{2}}$.
Equivalently, we require  
$\bH |x_0,y_0\>
=  w^{-1}|x_1,y_1\>,$
where $w=e^{{\pi i\over4}}$.
One can verify that these equations holds whenever $\bH=U \otimes Q$, where $U,Q$ are permutations that are Hermitian and also satisfy the equations 
$U|x_0\>= |x_1\>$ and $Q|y_0\>=w^{-1}|y_1\>$.

In our construction, the logical basis vectors are 
spanned by $|x_0\>|y_0\>$ and $|x_1\>|y_1\>$ 
where
\begin{align}
|x_k\> &= \frac{w^{5k}}{\sqrt 8} \sum_{j=1}^8 w^{(-1)^k(j-1)} |(j)_8\>,\notag\\
|y_0\> &= \frac{1}{\sqrt 8} \sum_{j=1}^8 (-w)^{ (j-1)} |(j)_8\> , \quad
|y_1\> = w |x_0\>.
\end{align}
Let us consider the permutations
$U=(1,6)_q(2,5)_q(3,4)_q(7,8)_q,$ 
$Q=(2,6)_q(4,8)_q,$ and	
$R=(1,7)_q(2,6)_q(3,5)_q.$
Disregarding the effects of global phases, the permutational Hadamard and phase-flips are
$\bH = U \otimes Q$ and $ \quad \bZ = R \otimes Q$ respectively.
Similarly, the permutational bit-flip operator and $\bX \bZ$ are  
$\bX  = URU \otimes Q =  (1,5)_q(2,4)_q(6,8)_q \otimes Q$ and 
$\bY  = URUR \otimes I_{2^8} = (1,7,5,3)_q(2,8,6,4)_q \otimes I_{2^8}$ respectively.

\subsection{On the possibility of permutational single-qubit Clifford gates}
\label{subsec:all24gates}

So far, we have shown that it is possible to implement a subgroup of the single-qubit Clifford gates by permuting the underlying physical qubits. 
But can we implement the entire set of single-qubit Clifford gates using permutations alone?
Here, we supply bounds on $M$, where $M$ denotes the minimum number of physical qubits for which the full set of single-qubit Clifford gates can be performed just by permuting the underlying qubits.

It is shown in \cite{planat2010clifford} that the set of all single-qubit Clifford gates modulo the global phase is isomorphic to $S_4$, which is a symmetric group of size 4. 
Now we argue that $M \ge 12$.
Let $M_k$ denote the subspace spanned by $|{\bf x}\>$ for ${\bf x}$ of Hamming weight $k$. Without loss of generality, the subspace $\mathcal C$ spanned by our logical qubit lies within $M_k$. 
Because $\bH$ and $\bP$ must be permutations on qubits, they induce orbits on appropriate subspaces of $M_k$. 
Because $\bH \bP$ and $\bP$ are isomorphic to a 3-cycle and a 4-cycle respectively on $S_4$, there must be bases $W_q = \{|q_i\>\}$ and $W_p \{|p_i\>\}$ within $M_k$ 
of cardinality $3n_1$ and $4n_2$ respectively where (1) $n_1$ and $n_2$ are positive integers, (2) $q_i$ and $p_i$ are binary vectors of weights $k$, and (3) $\bHP |q_i\> = |q_{i \mod 3 n_1}\>$ and $\bP |p_i\> = |p_{i \mod 4 n_2}\>$. 
Let $\mathcal C_{HP}$ denote the span of $W_q$ and $\mathcal C_P$ denote the span of $W_p$.
So $\mathcal C$ must lie within the intersection of $\mathcal C_{HP}$ and $\mathcal C_P$. But if the intersection of $\mathcal C_{HP}$ and $\mathcal C_P$ not equals $\mathcal C_P$, then $\bP$ does not stabilize $\mathcal C$ which is a contradiction. 
Hence we must have $\mathcal C = \mathcal C_{HP} = \mathcal C_P$ and hence $M$ must be a multiple of $12\ {\rm lcm}(n_1 ,n_2).$

We proceed to give a set of necessary and sufficient conditions for implementing the full set of single-qubit Clifford gates on $M$ qubits. 
\begin{theorem}
\label{th:nesuc}
If the full set of single-qubit Clifford gates can be implemented by $M$ qubits, then there exist two permutation matrices $\bf P$ and $\bf H$ with size $2^M$ 
and complex numbers $z_1$ and $z_2$ of modulus one such that
$\mathop{\rm rank}
\begin{pmatrix}
A\\
A^{\prime}
\end{pmatrix}
< 2^{M+1},$ where
\begin{equation}
A=
    \begin{pmatrix}
    \bP-z_1 I &  0 \\ 
    0 & -i \bP-z_1 I \\
    \end{pmatrix},\\
 A^{\prime}=
    \begin{pmatrix}
    \sqrt2 \bH-z_2 I &  -z_2 I \\ 
    -z_2 I & \sqrt2 \bH+z_2 I \\
    \end{pmatrix}. 
\end{equation}
\end{theorem}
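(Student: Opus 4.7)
The approach is direct verification: unpack the hypothesis into explicit eigenrelations for $\bP$ and $\bH$ on the logical subspace, then read these relations as the two blocks of a common null vector for the stacked matrix $\begin{pmatrix} A \\ A' \end{pmatrix}$.

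First I would unpack the hypothesis. Because the single-qubit Clifford group modulo global phase is generated by the phase gate $P = A_{11}$ and the Hadamard $H = A_{31}$, an implementation of all 24 Clifford gates by permutations on $M$ qubits yields, in particular, permutation matrices $\bP, \bH$ of size $2^M$ that preserve some two-dimensional logical subspace $\mathcal{C} \subseteq (\mathbb{C}^2)^{\otimes M}$ with orthonormal basis $\{|0_L\>, |1_L\>\}$ and act on $\mathcal{C}$ as $P$ and $H$ up to global phases. This produces scalars $z_1, z_2$ with $|z_1| = |z_2| = 1$ satisfying the four eigenrelations
\begin{align*}
\bP|0_L\> &= z_1 |0_L\>, \quad \bP|1_L\> = iz_1 |1_L\>, \\
\sqrt 2\,\bH|0_L\> &= z_2(|0_L\>+|1_L\>), \quad \sqrt 2\,\bH|1_L\> = z_2(|0_L\>-|1_L\>).
\end{align*}

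Next I would package the two logical states into the stacked vector $v = (|0_L\>, |1_L\>)^T \in \mathbb{C}^{2^{M+1}}$ and translate each relation into a row of the block-matrix equations $Av = 0$ and $A'v = 0$. The relation $\bP|1_L\> = iz_1|1_L\>$ is equivalent, after multiplying both sides by $-i$, to $(-i\bP - z_1 I)|1_L\> = 0$; combined with $(\bP - z_1 I)|0_L\> = 0$ this is precisely $Av = 0$. Rearranging the Hadamard relations to $(\sqrt 2\,\bH - z_2 I)|0_L\> - z_2|1_L\> = 0$ and $-z_2|0_L\> + (\sqrt 2\,\bH + z_2 I)|1_L\> = 0$ matches exactly the two rows of $A'$, giving $A'v = 0$. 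Thus $v$ lies in the joint kernel of $A$ and $A'$, i.e. in the kernel of $\begin{pmatrix} A \\ A' \end{pmatrix}$.

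Finally, because $|0_L\>$ is a unit vector, $v$ is nonzero, so this joint kernel is nontrivial. Consequently $\mathop{\rm rank}\begin{pmatrix} A \\ A' \end{pmatrix} < 2^{M+1}$, which is the desired conclusion. The argument is essentially a dictionary between the statement "permutations $\bP, \bH$ realise the $P$ and $H$ gates on a logical qubit up to global phase" and the statement "$A$ and $A'$ share a nonzero null vector." I do not anticipate any substantive obstacle; the only piece of bookkeeping worth flagging is the conversion $\bP|1_L\> = iz_1|1_L\> \Longleftrightarrow (-i\bP - z_1 I)|1_L\> = 0$, which explains why the second diagonal block of $A$ is $-i\bP - z_1 I$ rather than $\bP - iz_1 I$: the theorem normalises both blocks so that the same phase $z_1$ appears on the right.
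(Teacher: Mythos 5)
Your proof is correct and follows essentially the same route as the paper: both unpack the hypothesis into the four eigenrelations for $\bP$ and $\bH$ on the logical basis, stack the two logical states into a single vector annihilated by both $A$ and $A'$, and conclude the rank deficiency from the nontriviality of the joint kernel. Your remark on the normalisation $(-i\bP - z_1 I)$ versus $(\bP - iz_1 I)$ matches the paper's own observation that $A$ is not unique.
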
 
The above can be easily shown because 
the gates $\bP$ and $\bH$ which generate the set of single-qubit Clifford gates
must satisfy the equations 
$\bP {\bf u} = z_1 {\bf u}$,
$\bP {\bf v} = i z_1 {\bf v}$,
$\bH {\bf u} = z_2 {{\bf u}+{\bf v} \over \sqrt 2} $,
$\bH {\bf v} = z_2 {{\bf u}-{\bf v} \over \sqrt 2}$.
Then we can write the above as a matrix equation as
\begin{align}
    \begin{pmatrix}
    \bP &  0 \\ 
    0 & \bP \\
    \end{pmatrix}
    \begin{pmatrix}
     {\bf u} \\
     {\bf v} \\
    \end{pmatrix}
=
    z_1\begin{pmatrix}
     {\bf u} \\
    i {\bf v} \\
    \end{pmatrix},
    \quad
    \begin{pmatrix}
    \bH &  0 \\ 
    0 & \bH \\
    \end{pmatrix}
    \begin{pmatrix}
     {\bf u} \\
     {\bf v} \\
    \end{pmatrix}
=
\frac{z_2}{\sqrt 2}
    \begin{pmatrix}
     {\bf u}  + {\bf v}\\
     {\bf u} - {\bf v}\\
    \end{pmatrix}.
\end{align}
This is equivalent to 
\begin{align}
&
\left(
    \begin{pmatrix}
    \bP &  0 \\ 
    0 & -i \bP \\
    \end{pmatrix}
    -z_1
    \begin{pmatrix}
    I &  0 \\ 
    0 &  I \\
    \end{pmatrix}
    \right)
    \begin{pmatrix}
     {\bf u} \\
     {\bf v} \\
    \end{pmatrix}
=
    0,  \notag  \\
    &\left(
    \begin{pmatrix}
    \bH &  0 \\ 
    0 &  \bH \\
    \end{pmatrix}
    -\frac{z_2}{\sqrt{2}}
    \begin{pmatrix}
    I &  I \\ 
    I &  -I \\
    \end{pmatrix}
    \right)
    \begin{pmatrix}
     {\bf u} \\
     {\bf v} \\
    \end{pmatrix}
=0.
\end{align}
It follows that any non-zero solution for ${\bf u}$ and ${\bf v}$ yields a logical basis for our single-qubit Clifford gates. Since we only focus on the non-zero solutions, $A$ and $A'$ are not unique. For example, one can also suppose $A=
    \begin{pmatrix}
    \bP-z_1 I &  0 \\ 
    0 & \bP-iz_1 I \\
    \end{pmatrix}$.
Hence it suffices to find a non-zero intersection of kernels of $A$ and $A'$. 
This non-zero intersection in turn occurs if and only if 
\begin{eqnarray}
\mathop{\rm rank} \begin{bmatrix} A\\ A'\end{bmatrix} < 2^{M+1}.
\end{eqnarray}
which is a standard fact in matrix analysis \cite[Fact 2.11.3]{Bernstein:2009:MMT}.  
We however leave the problem of obtaining an upper bound on $M$ open.

\section{Discussion}
\label{sec:dis}
Expediting quantum computation on schemes that are inherently protected against noise is a tantalizing prospect.
Such schemes have been explored by DiVincenzo {\em et al.} \cite{divincenzo2000universal} on spin-based quantum computers, and more recently, also on topological quantum computers \cite{freedman2003topological}.
Utilizing permutations is one approach that can potentially speed up quantum computations that has been explored recently \cite{Guanyu1711.05752,Guanyu1806.02358,Guanyu1806.06078,planat2010clifford,planat2017magic,Planat1808.06831}.
However these quantum computations either operate on non-abelian anyons \cite{Guanyu1711.05752,Guanyu1806.02358,Guanyu1806.06078}, focus on just the group structure of permutational subgroups related to quantum computations \cite{planat2010clifford,Planat1808.06831}, or work on the magic state model of quantum computation related to permutations \cite{planat2017magic}.
In this sense, prior work has neither addressed the effect of permutations on the underlying qubits nor addressed the case when the qubits are regular fermions or bosons which are more abundant in an experimental setting.

In this report, we fill this gap by showing explicitly how to perform certain single-qubit quantum computations by simply permuting the underlying qubits. Together with exchange-interactions and other resonant couplings, our scheme allows a faster implementation of the Toffoli gate. 
Our scheme can be seen to be a permutational extension of the dual-rail encoding scheme \cite{KLM01}, and by virtue of being supported on encoded qubits with a low excitation number, 
allows universal gate set by using simple resonant interactions.
We also explore the possibility of implementing other single qubit gates by permutations, 
and give necessary and sufficient conditions for their realization.

We believe that determining if all the single-qubit Clifford gates can be realized on a DFS with only a single-excitation is an important problem. 
This is because if this were possible, exchange type couplings can implement a CNOT gate in parallel, and make possible an arbitrary Clifford computation on any number of logical qubits without requiring use of the Fredkin gate. 
Given that Clifford computations are known to be hard under reasonable computation assumptions \cite{yoganathan2018quantum},
this would give rise to a way to realize speedy Clifford computations using a simple scheme,  
and could bring us closer to the demonstration of quantum supremacy.

\section*{Acknowledgements}
We thank the referees for their feedback and recommendations.
YO acknowledges support from the Singapore National Research Foundation under NRF Award NRF-NRFF2013-01, the U.S. Air Force Office of Scientific Research under AOARD grant FA2386-18-1-4003,
and the Singapore Ministry of Education.
YO is supported by the EPSRC (grant no. EP/M024261/1), and also acknowledges the support of the QCDA project
which has received funding from the QuantERA ERANET Cofund in Quantum Technologies implemented within the European Union’s Horizon 2020 Programme.
LC and YS were supported by the NNSF of China (Grant No. 11871089), and the Fundamental Research Funds for the Central Universities (Grant Nos. KG12080401 and ZG216S1902).

\bibliography{HnP}{}
\bibliographystyle{ieeetr}

\appendix
\begin{widetext}

\section{A parallel construction of an effective many-body Hamiltonian}
\label{sec:proofth1}
Here, we will prove Theorem 1 in the main text. First we present a supporting lemma as follows. 

\begin{lemma} \label{lem:composition}
Let $\theta$ be any real number, and $P_{ij}$ be the permutation matrix which swaps the $i$th and $j$th rows. 
Then 
$ 
e^{i \theta P_{12} }
e^{i \theta P_{34} } 
= 
e^{i \theta}
e^{i \theta P_{12} P_{34} }$.
Moreover for every integer $n$ for which $n \ge 3$ we also have 
$e^{i \theta P_{12}\dots P_{(2n-3)(2n-2)} }
e^{i \theta P_{(2n-1)(2n)} } 
= 
e^{i \theta}
e^{i \theta P_{12} \dots P_{(2n-1)(2n)} }$.
\end{lemma}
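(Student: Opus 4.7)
The plan is to exploit the fact that each $P_{ij}$ is an involution, together with the disjointness of the supports of the two exponentials being multiplied on each side of the identity.

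First I would establish the closed form $e^{i\theta P} = \cos(\theta) I + i \sin(\theta) P$ for any permutation matrix $P$ satisfying $P^2 = I$, by separating the exponential series into even and odd powers. This applies to each individual transposition $P_{ij}$, and also to any product $Q = P_{12} P_{34} \cdots P_{(2k-1)(2k)}$ of pairwise disjoint transpositions, since such transpositions commute and $Q^2 = I$.

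The key algebraic input I would then prove is the identity $P_{12} + P_{34} = I + P_{12} P_{34}$, which is equivalent to $(I - P_{12})(I - P_{34}) = 0$. The vanishing follows from a support argument: the nonzero entries of $I - P_{12}$ sit in rows and columns indexed by $\{1,2\}$, while those of $I - P_{34}$ sit in rows and columns indexed by $\{3,4\}$, so no intermediate index in the matrix product can belong to both sets simultaneously. Substituting the closed forms into $e^{i\theta P_{12}} e^{i\theta P_{34}}$, expanding, and using $P_{12} + P_{34} = I + P_{12} P_{34}$ to eliminate the cross term, I would collect the result as $e^{i\theta}\bigl(\cos\theta\, I + i\sin\theta\, P_{12} P_{34}\bigr) = e^{i\theta}\, e^{i\theta P_{12} P_{34}}$, establishing the first assertion.

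For the general case with $n \geq 3$, let $Q = P_{12} P_{34} \cdots P_{(2n-3)(2n-2)}$ and repeat the same argument with $Q$ in place of $P_{12}$ and $P_{(2n-1)(2n)}$ in place of $P_{34}$. The hypotheses needed carry over verbatim: $Q^2 = I$ because $Q$ is a product of commuting involutions, and $(I - Q)(I - P_{(2n-1)(2n)}) = 0$ because $Q$ acts as the identity on basis vectors indexed by $\{2n-1, 2n\}$ while $P_{(2n-1)(2n)}$ acts as the identity on those indexed by $\{1, \ldots, 2n-2\}$. The main obstacle is purely the clean bookkeeping of this disjoint-support cancellation; no deeper structural insight is required. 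An induction on the number of factors then promotes this lemma to Theorem~\ref{thm:parallel-XX-16} directly.
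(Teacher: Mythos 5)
Your proposal is correct and rests on the same key fact as the paper's proof, namely the additive identity $P_{12}+P_{34}=I+P_{12}P_{34}$ (and its analogue with $Q=P_{12}\cdots P_{(2n-3)(2n-2)}$) together with commutativity of the disjoint transpositions. The only differences are in execution: you verify the identity via the disjoint-support factorization $(I-P_{12})(I-P_{34})=0$ rather than the paper's explicit block decomposition into $\sigma\oplus\cdots\oplus\sigma$, and you combine the exponentials by multiplying the closed forms $\cos\theta\, I + i\sin\theta\, P$ rather than by invoking $e^{A}e^{B}=e^{A+B}$ for commuting $A,B$; both variations are sound and neither changes the substance of the argument.
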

\begin{proof}
One can verify $\big[P_{12}P_{34}\dots P_{(2n-3)(2n-2)}, P_{(2n-1)(2n)}\big]=0$ for any $n\geq 2$. We first show the following identity.
\begin{align}
    P_{12}P_{34}\dots P_{(2n-3)(2n-2)} + P_{(2n-1)(2n)} &= P_{12}P_{34}\dots P_{(2n-3)(2n-2)}P_{(2n-1)(2n)}+I_{2n},\quad \forall n\geq 2. \label{eq:perm-identity}
\end{align}
By computing we obtain $P_{12}P_{34}\dots P_{(2n-1)(2n)}=\underbrace{\sigma\oplus\dots\oplus\sigma}_n$, where $\sigma=\begin{bmatrix}
0 & 1\\
1 & 0
\end{bmatrix}$.
Therefore, we obtain
\begin{equation}
\label{eq:identity}
\begin{aligned}
P_{12}P_{34}\dots P_{(2n-3)(2n-2)} + P_{(2n-1)(2n)}&=(\underbrace{\sigma\oplus\dots\oplus\sigma}_{n-1}\oplus I_2)+(I_{2n-2}\oplus \sigma)\\
                                                   &=(\underbrace{\sigma\oplus\dots\oplus\sigma}_n) + I_{2n} \\
                                                   &=P_{12}P_{34}\dots P_{(2n-1)(2n)}+ I_{2n}.
\end{aligned}
\end{equation}
Then it follows that 
\begin{equation}
\label{eq:lemma3}
e^{i \theta P_{12} }
e^{i \theta P_{34} } 
= e^{i \theta P_{12} + i \theta P_{34}}
= e^{i \theta P_{12} P_{34} +i \theta I_4}
= e^{i \theta} e^{i \theta P_{12} P_{34} },
\end{equation}
where the first equality uses \eqref{eq:perm-identity} and the second equailty uses \eqref{eq:identity}.
Similarly the result for $n \ge 3$ follows, and we complete the proof of the lemma.
\end{proof}
Using the above lemma then we prove Theorem \ref{thm:parallel-XX-16} as follows. 
\begin{proof}
Using Lemma \ref{lem:composition} and the fact that the swap operators $\pi_{ij}$ on qubits correspond precisely to the permutation of qubit labels, we obtain
\begin{equation}
\label{eq:th1proof}
\begin{aligned}
e^{i\theta \pi_{12}}
e^{i\theta \pi_{34}}
\dots
e^{i\theta \pi_{(2n-1)(2n)}} 
&=
e^{i\theta}e^{i\theta\pi_{12}\pi_{34}}
e^{i\theta \pi_{56}} 
\dots
e^{i\theta \pi_{(2n-1)(2n)}} \\
&=
e^{2i\theta}
e^{i\theta \pi_{12}\pi_{34}\pi_{56}}
e^{i\theta \pi_{78}} 
\dots
e^{i\theta \pi_{(2n-1)(2n)}} \\
&=\cdots  \\
&=e^{(n-1)i\theta}e^{i\theta \pi_{12}\pi_{34}\dots\pi_{(2n-1)(2n)}}.
\end{aligned}
\end{equation}
This completes the proof.
\end{proof}

\end{widetext}

\end{document}